\documentclass[authoryear,a4paper,12pt]{article}
%
% PACKAGES
%
\usepackage[english]{babel}
\usepackage{amsmath}
\usepackage{latexsym}
\usepackage{amssymb}
\usepackage{graphicx}
\usepackage{hyperref}
\usepackage{natbib}
\usepackage{subfigure}
\usepackage{epsfig}
%
% MATRICES
%
\newcommand{\mZ}{\mathbb{Z}}
\newcommand{\mM}{\mathbb{M}}
\newcommand{\mF}{\mathbb{F}}
\newcommand{\mG}{\mathbb{G}}
\newcommand{\mH}{\mathbb{H}}

\newcommand{\mJ}{\mathbb{J}}
\newcommand{\mP}{\mathbb{P}}
\newcommand{\mQ}{\mathbb{Q}}
\newcommand{\mR}{\mathbb{R}}
\newcommand{\diag}{\mathrm{diag}}
\def\T{{\mathrm{\scriptscriptstyle T}}}
%
% GAMMA, PI, MU, graphs and models
%
\newcommand{\g}[1]{\gamma_{#1}}
\newcommand{\m}[1]{\mu_{#1}}
\newcommand{\p}[1]{\pi_{#1}}
\newcommand{\G}{\mathcal{G}}% bidirected graph
\newcommand{\I}{\mathcal{I}}
\newcommand{\tD}{\tilde{D}}
%
% MISCELLANEA
%
\newcommand{\Mob}{M\"{o}bius}
\newcommand{\ci}{\!\perp \! \! \! \perp\!}

\newcommand{\M}[1]{\Gamma{}(#1)}% log-linear Moebius model
%
% THEOREMS
%
\newtheorem{definition}{Definition}
\newtheorem{lemma}{Lemma}
\newtheorem{proposition}{Proposition}
\newtheorem{theorem}{Theorem}
\newtheorem{corollary}{Corollary}
\newenvironment{proof}{\begin{trivlist}\item[] \mbox{\textit{Proof.}}}
{\hfill$\Box$ \end{trivlist}}
%
% PAGE SIZE
% a4paper is 8.3 x 11.7 inches
%
\setlength{\voffset}{0in} % plus 1 in
\setlength{\topmargin}{0in}
\setlength{\headheight}{0in}
\setlength{\headsep}{0in}
\setlength{\textheight}{9.7in}
\setlength{\hoffset}{-0.2in} % plus 1 in
\setlength{\oddsidemargin}{0in}
\setlength{\evensidemargin}{0in}
\setlength{\marginparwidth}{0in}
\setlength{\marginparsep}{0in}
\setlength{\textwidth}{6.7in}

\begin{document}
\title{Log-mean linear models for binary data}
\author{Alberto Roverato\footnote{University of Bologna
(\texttt{alberto.roverato@unibo.it})}\\
Monia Lupparelli\footnote{University of Bologna
(\texttt{monia.lupparelli@unibo.it})}\\
Luca La Rocca\footnote{University of Modena and
Reggio Emilia (\texttt{luca.larocca@unimore.it})}}
\date{June 2, 2012}
\maketitle
\begin{abstract}\small
This paper introduces a novel class of models for binary data, which we call log-mean linear models. The characterizing feature of these models is that they are specified by linear constraints on the log-mean linear parameter, defined as a log-linear expansion of the mean parameter of the multivariate Bernoulli distribution. We show that marginal independence relationships between variables can be specified by setting certain log-mean linear interactions to zero and, more specifically, that graphical models of marginal independence are log-mean linear models. Our approach overcomes some drawbacks of the existing parameterizations of graphical models of marginal independence.
\bigskip

\noindent\emph{Keywords}:
Contingency table; Graphical Markov model; Marginal independence; Mean parameter
\end{abstract}
\section{Introduction}
%----------------------------------------------------------
A straightforward way to parameterize the probability distribution
of a set of categorical variables is by means of their probability table.
Probabilities are easy to interpret but have the drawback that sub-models of interest
typically involve non-linear constraints on these parameters.
For instance, conditional independence relationships can be specified
by requiring certain factorizations of the cell probabilities;
see \citet{lau:1996} and \citet{cw:1996}. For this reason,
it is useful to develop alternative parameterizations such that sub-models of interest
correspond to linear sub-spaces of the parameter space of the saturated model.
In particular, we are interested in
graphical models of marginal independence,
which were introduced by \citet{cw:1993,cw:1996}
with the name of covariance graph models, but later addressed in the literature
also as bidirected graph models following \citet{ric:2003}.
These models have appeared in several applied contexts
as described in \citet{drton2008binary} and references therein.

In this paper we consider binary data and introduce a novel parameterization based on
a log-linear expansion of the mean parameter of the multivariate Bernoulli distribution,
which we call the log-mean linear parameterization.
We then define the family of log-mean linear models obtained by imposing linear constraints
on the parameter space of the saturated model. We show that marginal independence
between variables can be specified by setting certain log-mean linear interactions
to zero and, more specifically, that graphical models of marginal independence are
log-mean linear models.

In the discrete case, two alternative parameterizations of bidirected graph models
are available: the \Mob\ parameterization \citep{drton2008binary}
and  the multivariate logistic parameterization \citep{glo-mcc:1995,lup-mar-ber:2009}.
Our approach
avoids some disadvantages of both these parameterizations:
log-mean linear interactions can be interpreted as measures of association,
which allows one to specify interesting sub-models not readily available
using the \Mob\ parameterization,
and the likelihood function can be written in closed form,
which is not possible with the multivariate logistic parameterization.
Furthermore, the log-mean linear approach to bidirected graph modelling is computationally more efficient than the multivariate logistic one.

\section{Preliminaries}\label{SEC:preliminaries}
%----------------------------------------------------------
%
\subsection{Parameterizations for binary data}
%--------------------------------------------
Given the finite set $V=\{1,\ldots, p\}$, with $\mid{}V\mid{}=p$, let $X_{V}=(X_{v})_{v\in V}$ be a random vector of binary variables taking values in the set $\I_V=\{0,1\}^{p}$. We call $\I_V$ a $2^p$-table and its elements $i_{V} \in \I_V$ the cells of the table. In this way, $X_{V}$~follows a multivariate Bernoulli distribution with probability table $\pi(i_{V})$, $i_{V}\in\I_{V}$, which we assume to be strictly positive. Since $\I_{V}=\{0, 1\}^{p}=\{(1_{D}, 0_{V \backslash D})\mid D\subseteq V\}$, we can write the probability table as a vector $\pi=(\p{D})_{D \subseteq V}$ with entries $\pi=\mbox{pr}(X_{D}=1_{D},X_{V\backslash D}=0_{V\backslash D})$. We refer to $\pi$ as to the probability parameter of $X_{V}$ and recall that it belongs to the $(2^{p}-1)$-dimensional simplex, which we write as $\pi\in \Pi$.

In general,
we call $\theta$ a parameter of $X_{V}$ if it is a vector in $R^{2^p}$ that characterizes the joint probability distribution of $X_{V}$, and use the convention that the entries of $\theta$ (called interactions) are indexed by the subsets of $V$, i.e., $\theta=(\theta_D)_{D \subseteq V}$. If $\omega$ is an alternative parameter of $X_{V}$, then a result known as \Mob\ inversion states that
\begin{equation}\label{EQN.mon.Mobius}
\omega_D=\sum_{E \subseteq D} \theta_E\quad  (D \subseteq V)\quad
\Longleftrightarrow \quad
\theta_D=\sum_{E \subseteq D} (-1)^{|D\backslash E|} \omega_E\quad (D \subseteq V);
\end{equation}
see, among others, \citet[][Appendix~A]{lau:1996}.
Let $Z$ and $M$ be two $(2^p \times 2^p)$ matrices
with entries indexed by the subsets of $V \times V$
and given by $Z_{D,H}= 1 (D \subseteq H)$
and $M_{D,H}= (-1)^{|H \backslash D|} 1(D \subseteq H)$,
respectively, where $1(\cdot)$ denotes the indicator function.
Then, the equivalence~(\ref{EQN.mon.Mobius}) can be written in matrix form
as $\omega=  Z^{\T} \theta$ if and only if $\theta =M^{\T}\omega$,
and \Mob\ inversion follows by noticing that $M= Z^{-1}$.

We now review some well-known alternative parameterizations for the distribution of $X_V$, each defined by a smooth invertible mapping from $\Pi$ onto a smooth $(2^{p}-1)$-dimensional manifold of $R^{2^p}$. For simplicity, we denote both the mapping and the alternative parameter it defines by the same (greek) letter.

Multivariate Bernoulli distributions form a regular exponential family
with canonical log-linear parameter $\lambda$ computed as
$\lambda=M^{T} \log \p{}$.
The parameterization $\lambda$ captures conditional features
of the distribution of $X_V$ and is used to define the class of log-linear models,
which includes as a special case the class of undirected graphical models;
see~\citet[Chap.~4]{lau:1996}.

The mean parameter of the multivariate Bernoulli
distribution is $\m{}=(\m{D})_{D \subseteq V}$, where $\m{\emptyset}=1$
(on grounds of convention) and $\m{D}=P(X_{D}=1_{D})$ otherwise.
This was called the \Mob\ parameter by \citet{drton2008binary},
because one finds $\m{}= Z\p{}$.  The linear mapping $\pi\mapsto\mu$ is
trivially \Mob-inverted to obtain $\p{}=M \m{}$, for all $\m{}
\in \m{}(\Pi)$. However, the structure of $\m{}(\Pi)$ is rather involved,
and actually well-understood only for small~$p$.
The parameterization $\m{}$ captures marginal distributional features
of $X_V$ and thus satisfies the upward compatibility property, i.e.,
it is invariant with respect to marginalization.

\citet{Ekh-al:1995}, in a context of regression analysis,
proposed to modify the mean parameter by replacing each entry
$\mu_{D}$ of $\mu$ such that $\mid{}D\mid>1$ with the corresponding dependence ratio
defined as $\tau_{D}=\mu_{D}/(\prod_{v\in D}\mu_{\{v\}})$;
see also \citet{Ekh-al:2000} and \citet{dar-spe:1983},
where these ratios were used in models named Lancaster additive.
We define $\tau_D=\mu_D$ for $\mid D\mid\le1$
and call $\tau=(\tau_D)_{D\subseteq V}$ the dependence ratio parameter.

\citet{ber-rud:2002} developed
a wide class of parameterizations capturing
both marginal and conditional distributional features,
named marginal log-linear parameterizations,
which have been applied in several contexts;
see \citet{ber-al:2009}. Broadly speaking,
any marginal log-linear parameter is obtained by stacking
subvectors of log-linear parameters computed in suitable marginal
distributions. This class of parameterizations includes as
special, extreme, cases the log-linear parameterization $\lambda$,
where a single margin is used, and the multivariate
logistic parameterization of \citet{glo-mcc:1995},
denoted by $\eta=(\eta_D)_{D\subseteq V}$, where each $\eta_D$ is computed in
the margin $X_D$. The parameterization $\eta$ clearly
satisfies the upward compatibility property,
while the structure of $\eta(\Pi)$ is rather involved.
A disadvantage of these %marginal log-linear
parameterizations is that their inverse mappings cannot be
analytically computed (but for the special case of $\lambda$).
\subsection{Bidirected graph models}\label{SUB.luc.bigraphs}
%----------------------------------------------------------
\begin{figure}
\begin{center}
%FIGURA
\includegraphics[scale=1]{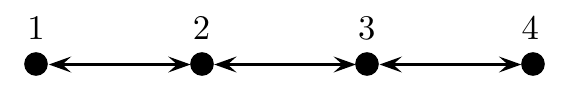}
\end{center}
\caption{Bidirected graph with disconnected sets
$\{1,3\}$, $\{1,4\}$, $\{2,4\}$, $\{1,2,4\}$ and $\{1,3,4\}$,
encoding the independencies $X_{\{1,2\}} \ci X_{4}$ and
$X_{1} \ci X_{\{3,4\}}$.}
\label{FIG.bid.graph}
\end{figure}

Graphical models of marginal independence
aim to capture marginal independence relationships
between variables.
Following \citet{ric:2003},
we use the convention that the independence structure of variables
is represented by a bidirected graph.
Nevertheless, we recall that these same models
have been previously discussed by \citet{cw:1993}
adopting a different graphical representation with undirected dashed edges.

A bidirected graph $\G=(V, E)$ is defined by a set $V = \{1, \dots, p\}$ of nodes
and a set $E$ of edges drawn as bidirected.
A set $D \subseteq V$ is said to be connected in $\G$ if it induces a connected subgraph
and it is said to be disconnected otherwise.
Any disconnected set $D \subseteq V$ can be uniquely partitioned into
its connected components $C_1, \dots, C_r$ such that $D = C_1\cup \cdots\cup C_r$;
see \citet{ric:2003} for technical details.

A bidirected graph model is the family of probability distributions for $X_V$
satisfying a given Markov property with respect to a bidirected graph $\G$.
The distribution of $X_V$ satisfies
the connected set Markov property \citep{ric:2003} if,
for every disconnected set~$D$,
the subvectors corresponding to its connected components $X_{C_1},\dots,X_{C_r}$
are mutually independent; in symbols $X_{C_{1}}\ci X_{C_{2}}\ci\cdots\ci X_{C_{r}}$.
%This implies the pairwise Markov property for which each pair of disjoint nodes corresponds to a pairwise independence; the reverse is in general not true.
We denote by $B(\G)$ the bidirected graph model for $X_V$
defined by $\G$ under the connected set Markov property.
See Figure~\ref{FIG.bid.graph} for an example.

Parameterizations for the class $B(\G)$  have been studied by
\citet{drton2008binary} and \citet{lup-mar-ber:2009},
where $B(\G)$ is defined by imposing
multiplicative constraints on $\m{}$ and linear constraints on $\eta$,
respectively; see also \citet{Forc-al:2010},
\citet{rud-al:2010}, \citet{eva-rich:2012} and \citet{mar-lup:2011}.

\section{Log-mean linear models}\label{SEC.mon.model}
\label{sec:LMLmodels}
%----------------------------------------------------------
We introduce a new class of models for the multivariate Bernoulli distribution
based on the notion of log-mean linear parameter,
denoted by $\g{}=(\g{D})_{D \subseteq V}$. Each element $\g{D}$ of $\g{}$
is a log-linear expansion of a subvector, namely $(\mu_E)_{E \subseteq D}$,
of the mean parameter:
\begin{eqnarray}\label{EQN:gammaD}
\g{D}=\sum_{E\subseteq D}\; (-1)^{\mid D\backslash E\mid}\;\log(\m{E}),
%\quad \forall D \subseteq V,
\end{eqnarray}
so that in vector form we have $\g{}=M^{\T}\log \m{}$.
Notice that by replacing $\m{}$ with $\p{}$ in~(\ref{EQN:gammaD})
one obtains the canonical log-linear parameter $\lambda$.
Indeed, we will show in the next section that the log-mean linear parameterization
defines a parameter space where the multiplicative constraints on the \Mob\ parameter
of \citet{drton2008binary} correspond to linear sub-spaces and,
from this perspective, it resembles the connection
between log-linear interactions and  cell probabilities.
It should be stressed, however, that the parameter space where $\g{}$ lives has,
like $\mu(\Pi)$ and $\eta(\Pi)$, a rather involved structure.

It is worth describing in detail the elements of $\g{}$ corresponding to sets
with low cardinality (its low-order interactions). Firstly, and trivially,
$\g{\emptyset}=\log \m{\emptyset}$ is always zero. Secondly,
for every $j \in V$, the main log-mean linear effect  $\g{\{j\}}=\log \m{\{j\}}$ is always negative, because $\m{{\{j\}}}$ is a probability. Then, for every $j,k \in V$,
the two-way log-mean linear interaction $\g{\{j,k\}}=\log\{\m{\{j,k\}}/(\m{\{j\}}\m{\{k\}})\}$
coincides with the logarithm of the second-order dependence ratio.
Finally, for every triple $j,k,z \in V$, the three-way
interaction is
$$
\g{\{j,k,z\}}= \log \frac{   \m{\{j,k,z\}} \m{\{j\}}  \m{\{k\}}
 \m{\{z\}}  }{ \m{\{j,k\}}  \m{\{j,z\}} \m{\{k,z\}}}
$$
and thus differs from the third-order dependence ratio;
the same is true for each $\g{D}$
with $\mid D\mid \geq 3$. Note that, already from two-way log-mean linear interactions,
it is apparent that $\g{}$ is not a marginal log-linear parameter of \citet{ber-rud:2002}.

We now formally define the log-mean linear parameterization as a mapping
from $\Pi$.
\begin{definition}\label{DEF:LML.parameter}
For a vector $X_V$ of binary variables, the log-mean linear parameterization $\gamma$ is defined by the mapping
\begin{eqnarray}\label{EQN.mon.map_pi-ga}
\g{}=M^{\T}\log  Z \p{},\quad\p{}\in\Pi.
\end{eqnarray}
\end{definition}
The multivariate logistic parameter $\eta$ can also be computed as
$\eta=C\log(L\pi)$ for a suitable choice of matrices $C$ and $L$,
so that the mapping $\pi\mapsto\eta$  resembles (\ref{EQN.mon.map_pi-ga}),
but with the major difference that $C$ and $L$ are rectangular matrices
of size $t\times 2^p$, with $t\gg 2^p$, so~that the inverse transformation
is not available in closed form.
On the other hand, in our case the inverse transformation can be analytically computed
by applying \Mob\ inversion twice to obtain $\p{}= M \exp  Z^{\T} \g{}$.
Clearly, the bijection specified by $\pi\mapsto\gamma$ is smooth,
so~that it constitutes a valid reparameterization. Finally, like $\mu$ and $\eta$, the parameterization $\gamma$ satisfies the upward compatibility property.

We next define log-mean linear models as follows.
\begin{definition}
For a vector $X_V$ of binary variables and a full rank $(2^p \times k)$ matrix~$H$,
where $k < 2^p$ and the rows of $H$ are indexed by the subsets of $V$,
the log-mean linear model $\M{H}$ is the family of probability distributions
for $X_V$ such that $H^{T}\g{}=0$.
\label{DEF:LML.model}
\end{definition}
It is not difficult to construct a matrix $H$ such that $\M{H}$ is empty. However, the family $\M{H}$ is non-empty if the linear constraints neither involve $\g{\emptyset}$ nor the main effect $\g{\{j\}}$, for every $j \in V$. More formally, a sufficient condition for $\M{H}$ to be non-empty is that the rows of $H$ indexed by $D \subseteq V$ with $\mid D\mid \leq 1$ be all equal to zero; see \S~\ref{SEC.mon.bid}.
\begin{proposition}
Any non-empty log-mean linear model $\M{H}$ is
a curved exponential family of dimension $(2^p-k-1)$.
\end{proposition}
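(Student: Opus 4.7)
The plan is to realize $\M{H}$ as a smooth embedded submanifold of the natural parameter space of the saturated multivariate Bernoulli family. Since the latter is a regular exponential family of order $2^p-1$ with canonical parameter $\tilde\lambda=(\lambda_D)_{\emptyset\neq D\subseteq V}\in R^{2^p-1}$, a curved exponential subfamily of dimension $d$ is by definition obtained by restricting $\tilde\lambda$ to a smooth $d$-dimensional embedded submanifold of this space. The backbone of the argument is that the chain $\tilde\lambda\leftrightarrow\p{}\leftrightarrow\g{}$ consists of smooth diffeomorphisms: the first arrow is the classical log-linear reparameterization of $\Pi$ onto $R^{2^p-1}$, and the second is the content of Definition~\ref{DEF:LML.parameter} combined with the closed-form inverse $\p{}=M\exp(Z^{\T}\g{})$ stated just after it.

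Next I would analyse the geometry of $\g{}(\Pi)$ and of the constraint $H^{\T}\g{}=0$. Since $\m{\emptyset}=1$, we have $\g{\emptyset}=0$ identically, so $\g{}(\Pi)$ lies in the hyperplane $A_0=\{\g{}\in R^{2^p}:\g{\emptyset}=0\}$; because $\Pi$ has dimension $2^p-1$ and $\g{}$ is a diffeomorphism onto its image, $\g{}(\Pi)$ must in fact be a relatively open subset of $A_0$. The model $\M{H}$ corresponds in $\g{}$-coordinates to $\g{}(\Pi)\cap L$, where $L=A_0\cap\{\g{}:H^{\T}\g{}=0\}$ is the linear subspace cut out by the $k$ equations $H^{\T}\g{}=0$ together with the automatic equation $\g{\emptyset}=0$. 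These $k+1$ equations are linearly independent—whence $\dim L=2^p-k-1$—provided the row of $H$ indexed by $\emptyset$ is not in the span of the remaining rows; this holds in particular under the sufficient non-emptiness criterion stated just after Definition~\ref{DEF:LML.model}.

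Finally, non-emptiness of $\M{H}$ gives a point of $\g{}(\Pi)\cap L$, and the relative openness of $\g{}(\Pi)$ in $A_0\supseteq L$ then makes $\g{}(\Pi)\cap L$ relatively open in $L$, hence a smooth $(2^p-k-1)$-dimensional embedded submanifold of $R^{2^p}$. Pushing this forward through the diffeomorphism $\g{}\leftrightarrow\tilde\lambda$ yields a smooth embedded submanifold of the natural parameter space of the same dimension, identifying $\M{H}$ as a curved exponential family of dimension $2^p-k-1$. The main subtlety is the bookkeeping around the trivial constraint $\g{\emptyset}=0$: if a row of $H$ were effectively to encode $\g{\emptyset}=0$, the correct count would be $2^p-k$ instead, so I would invoke the paper's explicit sufficient condition for non-emptiness (whereby the rows of $H$ indexed by $|D|\le 1$ are zero) to guarantee the required independence.
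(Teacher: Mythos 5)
Your argument follows the same route as the paper's (very terse) proof --- smoothness of the reparameterization $\p{}\mapsto\g{}$ plus linearity of the constraint --- but you flesh out the dimension count, and most of the extra content is correct and genuinely useful: since $\p{}\mapsto Z\p{}$, the componentwise logarithm, and $\nu\mapsto M^{\T}\nu$ are each homeomorphisms between the relevant affine slices, $\g{}(\Pi)$ is indeed a relatively open subset of the hyperplane $A_0=\{\g{}:\g{\emptyset}=0\}$, so $\g{}(\Pi)\cap L$ is relatively open in $L$ and the model's dimension equals $\dim L$. You also correctly spot the one point the paper glosses over: if the automatic constraint $\g{\emptyset}=0$ is already encoded by $H$, the dimension is $2^p-k$ rather than $2^p-k-1$ (take $H$ equal to the single column $e_{\emptyset}$; then $\M{H}$ is the whole saturated model, non-empty of dimension $2^p-1$, not $2^p-2$), so some hypothesis on $H$ beyond non-emptiness is implicitly needed for the statement to hold as announced.

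However, the linear-independence condition you state is backwards. The $k+1$ functionals $\g{}\mapsto h_j^{\T}\g{}$ (with $h_j$ the columns of $H$) and $\g{}\mapsto\g{\emptyset}$ are linearly independent precisely when $e_{\emptyset}\notin\mathrm{col}(H)$, i.e.\ when no linear combination of the \emph{columns} of $H$ equals the indicator vector of $\emptyset$. Translated into rows: $Hc=e_{\emptyset}$ has a solution $c$ if and only if the $\emptyset$-row of $H$ is \emph{not} in the row span of the remaining rows, so the condition you actually need is that the $\emptyset$-row of $H$ \emph{is} in that span --- the opposite of what you wrote. As written, your condition is contradicted by the very case you invoke: under the paper's sufficient criterion the $\emptyset$-row of $H$ is zero, hence trivially in the span of the remaining rows, so your stated hypothesis fails exactly where you claim it holds. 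The fix is just a matter of flipping the condition (a zero $\emptyset$-row does give $e_{\emptyset}\notin\mathrm{col}(H)$ and hence $\dim L=2^p-k-1$); the rest of the argument stands.
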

\begin{proof}
This follows from the mapping defining the parameterization
$\gamma$ being smooth, and the matrix $H$ imposing a
$k$-dimensional linear constraint on the parameter $\gamma$.
\end{proof}
Maximum likelihood estimation for log-mean linear models under a Multinomial or Poisson sampling scheme is a constrained optimization problem, which can be solved by means of standard algorithms. Specifically, we adopt an iterative method typically used for fitting marginal log-linear models which also gives the asymptotic standard errors;
see Appendix~\ref{SEC:APP-C} for details.
In our case, the algorithm is computationally more efficient than for marginal log-linear models, especially when these are obtained by constraining the multivariate logistic parameter, because, as remarked above, rectangular matrices of size $t\times 2^p$ with $t\gg 2^p$ are replaced by square matrices of size $2^p\times 2^p$.

The elements of $\gamma$, as well as those of $\mu$
and of $\tau$, are not symmetric under relabelling of the two states taken by the random variables,
because they measure event specific association.
\citet[\S~4]{Ekh-al:1995} show that in some contexts this feature may
amount to an advantage; see also the application in \S~\ref{sec:examples}.
Furthermore, this is not an issue in
the definition of bidirected graph models,
which is illustrated in the next section.

\section{Log-mean linear models and marginal independence}
\label{SEC.mon.bid}
%----------------------------------------------------------
We show that the log-mean linear parameterization $\g{}$ can be used to encode marginal independencies and, also, that bidirected graph models are log-mean linear models. Hence, the log-mean linear parameterization can be used in alternative to the approaches developed by \citet{drton2008binary} and \citet{lup-mar-ber:2009}. Our approach is appealing because it combines the advantages of the \Mob\ parameterization~$\m{}$ and of the multivariate logistic parameterization~$\eta$: the inverse map $\g{} \mapsto \p{}$ can be analytically computed, as for $\m{}$, and the model is defined by means of linear constraints, as for $\eta$.

The following theorem shows how suitable linear constraints on the log-mean linear parameter correspond to marginal independencies; see Appendix~\ref{SEC:APP-B} for a proof.
\begin{theorem}\label{THM:gamma.eq.zero.and.fact.mu}
For a vector $X_{V}$ of binary variables with probability parameter $\p{}\in\Pi$,
let~$\m{}=\m{}(\p{})$ and $\g{}=\g{}(\p{})$. Then, for a pair of disjoint, nonempty,
proper subsets $A$ and~$B$ of $V$, the following conditions are equivalent:
\begin{itemize}
\item[(i)] $X_{A}\ci X_{B}$; \item[(ii)] $\m{A^{\prime}\cup
B^{\prime}}=\m{A^{\prime}}\times \m{B^{\prime}}$ for every
$A^{\prime}\subseteq A$ and $B^{\prime}\subseteq B$; \item[(iii)]
$\g{A^{\prime}\cup B^{\prime}}=0$ for every $A^{\prime}\subseteq
A$ and $B^{\prime}\subseteq B$ such that $A^{\prime}\neq\emptyset$
and $B^{\prime}\neq \emptyset$.
\end{itemize}
\end{theorem}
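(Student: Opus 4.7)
The plan is to establish (i)$\Leftrightarrow$(ii) and (ii)$\Leftrightarrow$(iii) separately; both arguments are essentially M\"obius-inversion calculations exploiting $A\cap B=\emptyset$.

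For (i)$\Rightarrow$(ii), note that $\m{A'\cup B'}=P(X_{A'}=1_{A'},X_{B'}=1_{B'})$ since $A'\cap B'=\emptyset$, and $X_A\ci X_B$ forces this to factor as $\m{A'}\m{B'}$. For (ii)$\Rightarrow$(i), I would pass to the marginal distribution of $X_{A\cup B}$ (legitimate by the upward compatibility of $\m{}$) and apply the inverse $\p{}=M\m{}$ cell by cell: every joint cell probability $P(X_A=i_A,X_B=i_B)$ can be written as an alternating sum of $\m{A'\cup B'}$ over pairs $(A',B')$ with $A''\subseteq A'\subseteq A$ and $B''\subseteq B'\subseteq B$, where $A''$ and $B''$ are the supports of $i_A$ and $i_B$. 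Substituting (ii) lets the double sum factor as a product of two alternating sums, each of which is, by the same M\"obius inverse, a marginal cell probability of $X_A$ or $X_B$. This gives $X_A\ci X_B$.

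For (ii)$\Leftrightarrow$(iii), the M\"obius inverse of~(\ref{EQN:gammaD}) reads $\log\m{D}=\sum_{E\subseteq D}\g{E}$. Because $A\cap B=\emptyset$, each $E\subseteq A'\cup B'$ decomposes uniquely as $E_A\cup E_B$ with $E_A\subseteq A'$ and $E_B\subseteq B'$; using $\g{\emptyset}=0$ this produces the key identity
\begin{eqnarray*}
\log\m{A'\cup B'}-\log\m{A'}-\log\m{B'}=\sum_{\emptyset\neq E_A\subseteq A'}\,\sum_{\emptyset\neq E_B\subseteq B'}\g{E_A\cup E_B},
\end{eqnarray*}
valid for every nonempty $A'\subseteq A$ and $B'\subseteq B$. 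Direction (iii)$\Rightarrow$(ii) is immediate. Conversely, if (ii) holds, the right-hand side vanishes for every such $(A',B')$, and an induction on $|A'|+|B'|$---with base $|A'|=|B'|=1$ giving $\g{\{a,b\}}=0$, and the inductive step isolating $\g{A'\cup B'}$ since all strictly smaller mixed interactions have already been shown to vanish---yields (iii).

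The main obstacle is purely combinatorial bookkeeping: one must carefully exploit $A\cap B=\emptyset$ to factor each sum over $E\subseteq A'\cup B'$ as a double sum over $E_A\subseteq A'$ and $E_B\subseteq B'$. This product-of-sums factorization over the lattice $2^{A'}\times 2^{B'}$ is what couples the multiplicative form of (ii) to the linear form of (iii), and likewise the independence factorization in (i) to (ii). Once these splittings are set up cleanly, each direction becomes a short calculation.
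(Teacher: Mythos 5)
Your proof is correct, but it takes a genuinely different route from the paper's in both halves, so a comparison is worthwhile. For (ii)$\Rightarrow$(i), the paper argues by induction on the number of zeros in the cell $i_{A\cup B}$, peeling off one coordinate at a time via $P(X_{A\cup B}=i_{A\cup B})=P(X_{A'\cup B}=i_{A'\cup B})-P(X_{A'\cup B}=i_{A'\cup B},X_v=1)$; you instead apply \Mob\ inversion $\p{}=M\m{}$ in the margin $X_{A\cup B}$ and observe that, under (ii), the alternating sum over $E\supseteq A''\cup B''$ splits as a product of two alternating sums over $2^{A}\times 2^{B}$, each recovering a marginal cell probability. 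Your computation is more direct and makes the role of the lattice product structure explicit, at the cost of carrying the \Mob\ sums around. For (ii)$\Leftrightarrow$(iii), the paper gives two separate arguments: the forward direction invokes its Lemma~\ref{THM:sum.to.zero.more.general} (an alternating sum of an additively decomposable set function vanishes), while the reverse direction runs an induction on $|A\cup B|$ with an auxiliary vector $\m{}^{\ast}$ that again feeds into that lemma. You instead work with the \emph{inverse} transform $\log\m{D}=\sum_{E\subseteq D}\g{E}$ and derive the single identity
\begin{equation*}
\log\m{A'\cup B'}-\log\m{A'}-\log\m{B'}=\sum_{\emptyset\neq E_A\subseteq A'}\ \sum_{\emptyset\neq E_B\subseteq B'}\g{E_A\cup E_B},
\end{equation*}
which is correct (the terms with $E_A=\emptyset$ or $E_B=\emptyset$ exactly cancel $\log\m{A'}+\log\m{B'}$ because $\g{\emptyset}=0$), from which (iii)$\Rightarrow$(ii) is immediate and (ii)$\Rightarrow$(iii) follows by the induction on $|A'|+|B'|$ you describe, since the identity isolates $\g{A'\cup B'}$ modulo strictly smaller mixed interactions. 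This unified identity is arguably cleaner and dispenses with the paper's lemma entirely; the paper's formulation, on the other hand, keeps the two directions conceptually separate and packages the combinatorial cancellation into a reusable statement. I see no gap in your argument.
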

We remark that the equivalence (i)$\Leftrightarrow$(ii)
of Theorem~\ref{THM:gamma.eq.zero.and.fact.mu} follows immediately
from Theorem~1 of \citet{drton2008binary}. Furthermore,
it is straightforward to see that (ii) could be restated by replacing
the $\mu$-interactions with the corresponding $\tau$-interactions.

The next result generalizes Theorem~\ref{THM:gamma.eq.zero.and.fact.mu}
to the case of three or more subvectors; see Appendix~\ref{SEC:APP-B} for a proof.
\begin{corollary}\label{THM:gamma.eq.zero.and.mutual.indep}
For a sequence $A_{1},\ldots, A_{r}$ of $r \geq 2$ pairwise disjoint, nonempty,
subsets of $V$, let $\mathcal{D}=\{D\mid D\subseteq A_{1}\cup\cdots\cup A_{r}
\;\mbox{with}\;D\not\subseteq A_{i}\;\mbox{for}\;i=1,\ldots,r\}$.
Then $X_{A_{1}},\dots, X_{A_{r}}$ are mutually independent if and only if
$(\g{D})_{D\in\mathcal{D}}=0$.
\end{corollary}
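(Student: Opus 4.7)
The plan is to prove the corollary by induction on $r$, using Theorem~\ref{THM:gamma.eq.zero.and.fact.mu} as the base case $r=2$ and as the splitting tool in the inductive step. The chain-rule characterization of mutual independence, namely
\[
X_{A_1}\ci X_{A_2}\ci\cdots\ci X_{A_r} \;\Longleftrightarrow\; X_{A_1}\ci X_{A_2\cup\cdots\cup A_r} \;\text{and}\; X_{A_2}\ci X_{A_3}\ci\cdots\ci X_{A_r},
\]
will let me peel off one subset at a time and combine constraints obtained from Theorem~\ref{THM:gamma.eq.zero.and.fact.mu} with those provided by the inductive hypothesis.

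For the inductive step, assume the corollary holds for $r-1$ subsets. Apply Theorem~\ref{THM:gamma.eq.zero.and.fact.mu} with $A=A_1$ and $B=A_2\cup\cdots\cup A_r$ to translate $X_{A_1}\ci X_{A_2\cup\cdots\cup A_r}$ into
\[
\g{D}=0 \quad\text{for every } D\subseteq A_1\cup\cdots\cup A_r \text{ with } D\cap A_1\neq\emptyset \text{ and } D\cap(A_2\cup\cdots\cup A_r)\neq\emptyset.
\]
Then apply the inductive hypothesis to $X_{A_2},\dots,X_{A_r}$. Here the key observation is the upward compatibility of $\g{}$ noted after Definition~\ref{DEF:LML.parameter}: the log-mean linear parameter of the marginal distribution of $X_{A_2\cup\cdots\cup A_r}$ is exactly the subvector $(\g{D})_{D\subseteq A_2\cup\cdots\cup A_r}$ of $\g{}$, so the inductive hypothesis gives $\g{D}=0$ for every $D\subseteq A_2\cup\cdots\cup A_r$ with $D\not\subseteq A_i$ for $i=2,\dots,r$.

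It then remains to verify that the union of these two families of constraints coincides with $\{\g{D}=0:D\in\mathcal{D}\}$. For any $D\in\mathcal{D}$, either $D\cap A_1\neq\emptyset$, in which case $D\not\subseteq A_1$ forces $D\cap(A_2\cup\cdots\cup A_r)\neq\emptyset$ and the first family applies, or $D\subseteq A_2\cup\cdots\cup A_r$ and the second family applies. Conversely, every $D$ appearing in either family fails to be contained in any single $A_i$, hence lies in $\mathcal{D}$. I expect the main obstacle to be purely bookkeeping: making the partition of $\mathcal{D}$ into these two families explicit and invoking upward compatibility at the right moment so that the inductive hypothesis can be applied directly to the same $\g{}$ rather than to a separately computed marginal parameter.
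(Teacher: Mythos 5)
Your proof is correct, but it follows a genuinely different route from the paper's. The paper avoids induction altogether: it uses the characterization that mutual independence of $X_{A_1},\dots,X_{A_r}$ is equivalent to the conjunction of the $r$ one--versus--rest independencies $X_{A_i}\ci X_{A_{-i}}$ with $A_{-i}=\bigcup_{j\neq i}A_j$, applies Theorem~\ref{THM:gamma.eq.zero.and.fact.mu} to each of these to get zero constraints on the index sets $\mathcal{D}_i=\{D\subseteq A_i\cup A_{-i}: D\cap A_i\neq\emptyset,\ D\cap A_{-i}\neq\emptyset\}$, and then checks that $\bigcup_{i=1}^r\mathcal{D}_i=\mathcal{D}$. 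You instead peel off $A_1$ via the chain-rule characterization and recurse on $X_{A_2},\dots,X_{A_r}$, which obliges you to invoke the upward compatibility of $\gamma$ so that the inductive hypothesis, stated for the marginal distribution of $X_{A_2\cup\cdots\cup A_r}$, can be read off as constraints on the subvector $(\g{D})_{D\subseteq A_2\cup\cdots\cup A_r}$ of the original parameter; you correctly flag this as the one non-trivial ingredient, and your set-theoretic bookkeeping showing that the two families of constraints partition $\mathcal{D}$ is sound (the only degenerate case, $D=\emptyset$, is excluded from both families and from $\mathcal{D}$). What the paper's argument buys is symmetry and brevity---a single application of Theorem~\ref{THM:gamma.eq.zero.and.fact.mu} for each $i$ and one union-of-index-sets computation, with no appeal to marginalization; what yours buys is that it only ever uses Theorem~\ref{THM:gamma.eq.zero.and.fact.mu} with the first block against the rest, at the mild cost of an induction and an explicit use of upward compatibility. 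Both rest on standard equivalent formulations of mutual independence, so neither has a logical gap.
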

An interesting special case of~Corollary~\ref{THM:gamma.eq.zero.and.mutual.indep}
is given below; see Appendix~\ref{SEC:APP-B} for a proof.
\begin{corollary}\label{THM:independence.of.singletons}
For a subset $A\subseteq V$ with $\mid A\mid>1$, the variables in $X_{A}$
are mutually independent  if and only if $\g{D}=0$ for every
$D\subseteq A$ such that $\mid D\mid>1$.
\end{corollary}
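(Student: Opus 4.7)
The plan is to deduce this from Corollary~\ref{THM:gamma.eq.zero.and.mutual.indep} by taking the subsets $A_1, \dots, A_r$ to be the singleton decomposition of $A$. Specifically, write $A = \{j_1, \dots, j_r\}$ with $r = |A| \geq 2$ and set $A_i = \{j_i\}$ for $i = 1, \dots, r$. These are pairwise disjoint, nonempty subsets of $V$, so Corollary~\ref{THM:gamma.eq.zero.and.mutual.indep} applies, and mutual independence of $X_{j_1}, \dots, X_{j_r}$ (which is the same as mutual independence of the variables in $X_A$) is equivalent to $(\gamma_D)_{D\in\mathcal{D}}=0$.

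The only step that requires checking is the identification of the index set $\mathcal{D}$ in this special case. By definition, $\mathcal{D}$ consists of those $D \subseteq A_1 \cup \cdots \cup A_r = A$ such that $D \not\subseteq A_i$ for any $i$. If $|D| \leq 1$, then either $D = \emptyset$, which is contained in every $A_i$, or $D$ is a singleton $\{j_i\} \subseteq A_i$ for the unique $i$ with $j = j_i$; in either case $D$ is contained in some $A_i$ and is thus excluded from $\mathcal{D}$. Conversely, if $|D| \geq 2$, then $D$ contains at least two distinct singletons $A_i$ and $A_{i'}$ and so cannot be contained in any single $A_i$. Hence $\mathcal{D} = \{D \subseteq A \mid |D| > 1\}$.

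Combining these two observations yields the claim. There is no substantive obstacle here: the proof is a pure unwinding of definitions once Corollary~\ref{THM:gamma.eq.zero.and.mutual.indep} is in hand, the only mildly delicate point being to verify that both $\emptyset$ and all singletons are excluded from $\mathcal{D}$ (so that no conditions are imposed on $\gamma_{\emptyset}$ or on the main effects $\gamma_{\{j\}}$, consistently with the remark following Definition~\ref{DEF:LML.model} that such constraints would lead to an empty model).
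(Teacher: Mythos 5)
Your proof is correct and takes the same route as the paper, which likewise applies Corollary~\ref{THM:gamma.eq.zero.and.mutual.indep} with the singleton decomposition $A=A_1\cup\cdots\cup A_r$, $|A_i|=1$; you merely spell out the identification of $\mathcal{D}$ with $\{D\subseteq A : |D|>1\}$, which the paper leaves implicit.
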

We stated in \S~\ref{SEC.mon.model} that $\M{H}$ is
non-empty whenever the rows indexed by $D\subseteq V$ with
$\mid D\mid\le1$ are equal to zero. This fact derives from
Corollary~\ref{THM:independence.of.singletons},
because the distribution of mutually independent variables satisfies
the constraint $H^T\g{}=0$.

It follows from Theorem~\ref{THM:gamma.eq.zero.and.fact.mu} that
the probability distribution of $X_{V}$ satisfies the pairwise Markov property
with respect to a bidirected graph $\G=(V, E)$ if and only if $\g{\{j,k\}}=0$
whenever $j$ and $k$ are disjoint nodes in $\G$.
The following theorem shows that bidirected graph models for binary data
are log-mean linear models also under the connected set Markov property;
see Appendix~\ref{SEC:APP-B} for a proof.
\begin{theorem}\label{THM:gamma.eq.zero.and.disc.MP}
The distribution of a vector of binary variables $X_V$ belongs to
the bidirected graph model $B(\G)$ if and only if its log-mean
linear parameter $\g{}$ is such that $\g{D}=0$ for every set $D$
disconnected in $\G$.
\end{theorem}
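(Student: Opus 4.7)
The plan is to derive both directions directly from Corollary~\ref{THM:gamma.eq.zero.and.mutual.indep} applied to each disconnected set of $\G$ together with its connected-component decomposition. The connected set Markov property says that for every disconnected $D\subseteq V$ with components $C_{1},\dots,C_{r}$, the subvectors $X_{C_{1}},\dots,X_{C_{r}}$ are mutually independent; Corollary~\ref{THM:gamma.eq.zero.and.mutual.indep} re-expresses such a mutual independence exactly as the vanishing of $\g{E}$ for all $E\subseteq C_{1}\cup\cdots\cup C_{r}=D$ with $E\not\subseteq C_{i}$ for any $i$. So the whole proof is a bookkeeping argument: I need to check that, as $D$ ranges over disconnected sets, the family of sets $E$ produced by Corollary~\ref{THM:gamma.eq.zero.and.mutual.indep} coincides with the family of disconnected subsets of $V$.

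For the ``only if'' direction, fix any disconnected $D$ with components $C_{1},\dots,C_{r}$. Since $D\in B(\G)$, the connected set Markov property gives $X_{C_{1}}\ci\cdots\ci X_{C_{r}}$, and Corollary~\ref{THM:gamma.eq.zero.and.mutual.indep} forces $\g{E}=0$ for every $E\subseteq D$ not contained in any single $C_{i}$. Taking $E=D$ itself (which meets at least two components by definition of disconnectedness) yields $\g{D}=0$.

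For the ``if'' direction, assume $\g{D}=0$ for every disconnected $D$ and, given an arbitrary disconnected $D$ with components $C_{1},\dots,C_{r}$, we must establish the mutual independence $X_{C_{1}}\ci\cdots\ci X_{C_{r}}$. By Corollary~\ref{THM:gamma.eq.zero.and.mutual.indep} it suffices to show $\g{E}=0$ for every $E\subseteq D$ with $E\not\subseteq C_{i}$ for any $i$. The key observation is that every such $E$ is itself disconnected in $\G$: because $E$ contains vertices from (at least) two distinct components $C_{i}$ and $C_{j}$ of $D$, and any path in the subgraph induced by $E\subseteq D$ is also a path in the subgraph induced by $D$, no such path can cross from $C_{i}$ to $C_{j}$. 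Hence $E$ is disconnected and, by hypothesis, $\g{E}=0$.

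The only non-bookkeeping step is the ``key observation'' above, and it is essentially definitional; so I do not expect any real obstacle, provided one is careful with the quantifiers in Corollary~\ref{THM:gamma.eq.zero.and.mutual.indep} and with the distinction between subgraphs induced by $D$ and by $E$. The argument also implicitly uses that connected components of $D$ are still unions of connected components when restricted to $E$ in the sense that no new connectivity is created by passing to a smaller vertex set---again immediate, but worth noting once in the write-up.
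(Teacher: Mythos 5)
Your proposal is correct and follows essentially the same route as the paper: both directions reduce to Corollary~\ref{THM:gamma.eq.zero.and.mutual.indep} applied to the connected-component decomposition of each disconnected set, with the crucial (and correctly justified) observation that every $E\subseteq D$ meeting at least two components of $D$ is itself disconnected in $\G$. The only cosmetic difference is that the paper first cites Lemma~1 of Drton and Richardson to phrase membership in $B(\G)$ as mutual independence over all disconnected sets, whereas you invoke the connected set Markov property directly, which is equivalent given the paper's definition.
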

For instance, if $\G$ is the graph in Figure~\ref{FIG.bid.graph}
the bidirected graph model $B(\G)$ is defined by the linear constraints $\g{\{1,3\}}=\g{\{1,4\}}=\g{\{2,4\}}=\g{\{1,2,4\}}=\g{\{1,3,4\}}=0$.

\section{Application}\label{sec:examples}
%----------------------------------------------------------
%

Table~\ref{tab.coppen} shows data from \citet{coppen:1966}
for a set of four binary variables concerning symptoms of 362 psychiatric patients.
\cite{wermuth:1976} analysed these data within the family of
decomposable undirected graphical models,
but a visual inspection of Table~6 of \cite{wermuth:1976} suggests that
also investigating the marginal independence structure may be useful.
\begin{table}[b]
\begin{center}\small
\caption{Data from \citet{coppen:1966} on four symptoms of 362 psychiatric patients.}
\bigskip
\label{tab.coppen}
\begin{tabular}{lllrrrr} \hline
            &             & Solidity &  hysteric      &            &         rigid & \\
Stability & Depression & Validity &  psychasthenic &  energetic & psychasthenic & energetic \\
\hline
extroverted &          no &   &       12 &          47 &        8 &          14  \\
            &         yes &   &       16 &          14 &       22 &          23  \\
introverted &          no &   &       27 &          46 &       22 &          25  \\
            &         yes &   &       32 &           9 &       30 &          15  \\
\hline
\end{tabular}
\end{center}
\end{table}
For this reason, we performed an exhaustive model search
within the family of bidirected graph models and selected
the model with optimal value of the Bayesian information criterion
among those whose $p$-value,
computed on the basis of the asymptotic chi-squared distribution of the deviance,
is not smaller than 0.05.
The selected model has deviance $\chi^2_{(5)}=8.6$ ($p=0.13$, $BIC=-20.85$) and
corresponds to the graph of Figure~\ref{FIG.bid.graph},
where $X_{1}=\mbox{Stability}$, $X_{2}=\mbox{Validity}$,
$X_{3}=\mbox{Depression}$ and $X_{4}=\mbox{Solidity}$.

The application of bidirected graph models is typically motivated
by the fact that the observed Markov structure can be
represented by a data generating processes with latent variables.
In particular, the independence structure of the selected model is compatible,
among others, with the generating process represented in Figure~\ref{FIG:DAG.with.latent},
where $U$ is a latent factor.
\begin{figure}
\begin{center}
%FIGURA
\includegraphics[scale=1]{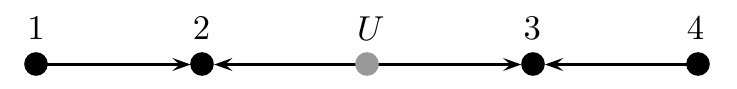}
\end{center}
\caption{A generating model for Coppen's data.}
\label{FIG:DAG.with.latent}
\end{figure}
Under the generating model in Figure~\ref{FIG:DAG.with.latent},
one may be interested in investigating substantive research hypotheses
on the role of the latent. For instance,
$U$ might be a binary variable representing a necessary factor for Depression:
$\{U=\mbox{on}\}$ might be a necessary condition for patients
to have acute depression, that is, for $\{X_{3}=\mbox{yes}$\}.
Formally, we might have $\mbox{pr}(X_{3}=\mbox{yes}\mid U=\mbox{off})=0$,
whereas $0<\mbox{pr}(X_{3}=\mbox{yes}\mid U=\mbox{on})<1$; see \citet[][\S~3.1]{Ekh-al:2000}.

If, in the above generating process, $U$ represents a necessary factor, then the context-specific independence $X_{\{1,2\}} \ci X_{4} \mid \{X_{3}=\mbox{yes}\}$ holds but, typically, $X_{\{1,2\}} {\not\ci} X_{4} \mid \{X_{3}=\mbox{no}\}$. Furthermore, if the levels of
$X_3=\mbox{Depression}$ are coded so that  $\mbox{yes}=1$, the above context-specific independence is satisfied in the selected marginal independence model if and only if some additional log-mean linear interactions are equal to zero, namely, $\gamma_{\{2,3,4\}}=\gamma_{\{1,2,3,4\}}=0$; see Appendix~\ref{SEC:APP-D} for details. Thus, by properly coding the levels of $X_{3}$, we can specify a log-mean linear model that encodes the independence structure of the graph in Figure~\ref{FIG:DAG.with.latent} together with the context-specific independence implied by the assumption that $U$ is a necessary factor for $\{X_{3}=\mbox{yes}\}$. This model has deviance $\chi^2_{(7)}=17.08$ ($p=0.02$, $BIC=-24.16$) and, therefore, the necessary factor hypothesis is only weakly supported by the data. We remark that this log-mean linear model is invariant with respect to the coding of $X_{\{1,2,4\}}$, because it is fully specified by the constraints $X_{\{1,2\}} \ci X_{4}$, $X_{1} \ci X_{\{3,4\}}$ and  $X_{\{1,2\}}\ci X_{4}\mid \{X_{3}=1\}$. On the other hand, the log-mean linear model specified by the same zero constraints, but coding the levels of $X_{3}=\mathrm{Depression}$ so that $\mbox{no}=1$, allows one to verify the hypothesis that $U$ is a necessary factor for the absence of depression, that is, for $\{X_{3}=\mbox{no}\}$. The latter log-mean linear model provides an adequate fit with deviance $\chi^2_{(7)}=9.3$ ($p=0.23$, $BIC=-31.94$) so that the hypothesis is not contradicted by the data.

\section{Discussion}\label{SEC:discussion}
%----------------------------------------------------------
Our log-linear expansion of $\mu$ provides the first instance of a parameterization for binary data, not belonging to the class of marginal log-linear parameterizations, which allows one to specify bidirected graph models through linear constraints.

We deem that  the log-mean linear parameterization represents an appealing candidate for the implementation of Bayesian procedures for this class of models because the likelihood function under Multinomial or Poission sampling is explicitly available and marginal independencies correspond to zero-interactions. However, there are still difficulties related to the involved structure of the parameter space, which is a common trait of marginal parameterizations.

The specification of log-mean linear models encoding substantive research hypotheses, possibly by exploiting the asymmetry of our parameterization with respect to variable coding which we briefly touched upon in \S~\ref{sec:examples}, represents an open research area. Clearly, log-mean linear models can incorporate any linear constraint on $\log(\tau)$, because the latter is a linear transformation of $\gamma$. Some instances of substantive research assumptions that can be expressed in this way, such as, for instance, horizontal and vertical homogeneity of dependence ratios, can be found in \citet{Ekh-al:1995} and \citet{Ekh-al:2000}.

\section*{Acknowledgments}
%------------------------------------------------------------
We gratefully acknowledge useful discussions with David R. Cox, Mathias Drton, Antonio Forcina,  Giovanni M. Marchetti, and Nanny Wermuth.
%------------------------------------------------------------------
\appendix
\section*{Appendices}
%------------------------------------------------------------------
\section{Proofs of technical results}\label{SEC:APP-B}
%---------------------------------------------------------------
The following Lemma is instrumental in proving Theorem~\ref{THM:gamma.eq.zero.and.fact.mu}.
%
%
%----------------------------------------------------------
\begin{lemma}\label{THM:sum.to.zero.more.general}
Let $g(\cdot)$ be a real-valued function defined on the sub-sets of
a set $D$. If two non-empty, disjoint, proper sub-sets $A$ and $B$
of $D$ exist, such that $A\cup B=D$ and
$g(E)=g(E\cap A)+g(E\cap B)$ for every $E\subseteq D$,
then $\sum_{E\subseteq D}\;(-1)^{|D\backslash E|}\;g(E)=0$.
\label{lemma}
\end{lemma}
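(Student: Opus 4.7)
The plan is to exploit the additive decomposition of $g$ together with the bijection between subsets of $D = A \cup B$ and pairs $(E_A, E_B)$ with $E_A \subseteq A$, $E_B \subseteq B$. Since $A$ and $B$ are disjoint, any $E \subseteq D$ decomposes uniquely as $E = E_A \cup E_B$ with $E_A = E \cap A$ and $E_B = E \cap B$, and correspondingly $|D \setminus E| = |A \setminus E_A| + |B \setminus E_B|$.

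First I would substitute the hypothesis $g(E) = g(E_A) + g(E_B)$ into the target sum and factor the double sum:
\begin{align*}
\sum_{E \subseteq D} (-1)^{|D \setminus E|} g(E)
&= \sum_{E_A \subseteq A} \sum_{E_B \subseteq B} (-1)^{|A \setminus E_A|}(-1)^{|B \setminus E_B|} \bigl[g(E_A) + g(E_B)\bigr] \\
&= \Bigl[\sum_{E_A \subseteq A} (-1)^{|A \setminus E_A|} g(E_A)\Bigr]\Bigl[\sum_{E_B \subseteq B}(-1)^{|B \setminus E_B|}\Bigr] \\
&\quad + \Bigl[\sum_{E_A \subseteq A}(-1)^{|A \setminus E_A|}\Bigr]\Bigl[\sum_{E_B \subseteq B}(-1)^{|B \setminus E_B|} g(E_B)\Bigr].
\end{align*}

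Then I would note that for any non-empty finite set $S$,
\[
\sum_{E \subseteq S} (-1)^{|S \setminus E|} = \sum_{k=0}^{|S|} \binom{|S|}{k}(-1)^{|S|-k} = (1-1)^{|S|} = 0.
\]
Since both $A$ and $B$ are assumed non-empty, the two ``bare'' alternating sums above vanish, and hence each of the two products is zero. This proves the claim.

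There is essentially no obstacle here: the only thing to verify carefully is that the bijection between $E \subseteq D$ and pairs $(E_A, E_B)$ is well-defined (which uses $A \cap B = \emptyset$ and $A \cup B = D$) and that the exponent of $(-1)$ factors accordingly. Non-emptiness of $A$ and $B$ is used precisely to kill the two alternating binomial sums; properness of $A$ and $B$ is automatic once both are non-empty and disjoint with union $D$, so no separate argument is needed.
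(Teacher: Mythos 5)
Your proof is correct and follows essentially the same route as the paper: both split the sum over $E\subseteq D$ into a double sum over $(E\cap A, E\cap B)$, factor the sign as $(-1)^{|A\setminus E_A|}(-1)^{|B\setminus E_B|}$, apply the additivity hypothesis, and kill the resulting terms using the vanishing of the alternating sum $\sum_{E\subseteq S}(-1)^{|S\setminus E|}=0$ over a non-empty set $S$ (which you justify via the binomial theorem and the paper via the even/odd subset count). The only cosmetic difference is that you factor the two products in one step while the paper eliminates them sequentially.
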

\begin{proof}
We start this proof by recalling a well-known fact. It can be proven by induction that any non-empty set $D$
has the same number of even and odd sub-sets. Consequently, it holds that 
\begin{equation}
\sum_{E\subseteq
D}\;(-1)^{|E|}=\sum_{E\subseteq D}\;(-1)^{|D\backslash E|}=0\quad \mbox{for all set } D\neq\emptyset.
\label{eq:basic}
\end{equation}
We will use this fact twice in the remainder of this proof. 

If we set $h=\sum_{E\subseteq D}\;(-1)^{|D\backslash E|}\;g(E)$,
then we have to show that $h=0$. Since $A$ and $B$ form a
partition of $D$, we can write
\begin{eqnarray*}
h &=&\sum_{A^{\prime}\subseteq A}\sum_{B^{\prime}\subseteq B}\;
(-1)^{|(A\cup B)\backslash (A^{\prime}\cup B^{\prime})|}\;
g(A^{\prime}\cup B^{\prime}),
\end{eqnarray*}
where $A^\prime=E\cap A$ and  $B^\prime=E\cap B$. Then, from the
fact that $A\cap B=A^{\prime}\cap B^{\prime}=A^{\prime}\cap
B=B^{\prime}\cap A=\emptyset$ it follows both that $(-1)^{|(A\cup
B)\backslash (A^{\prime}\cup B^{\prime})|}= (-1)^{|A\backslash
A^{\prime}|}\times (-1)^{|B\backslash B^{\prime}|}$ and that
$g(A^{\prime}\cup B^{\prime})=g(A^{\prime})+g(B^{\prime})$. Hence,
we obtain
\begin{eqnarray*}
h&=&\sum_{A^{\prime}\subseteq A}\sum_{B^{\prime}\subseteq B}\;(-1)^{|A\backslash A^{\prime}|}(-1)^{|B\backslash B^{\prime}|} \; \left\{g(A^{\prime})+g(B^{\prime})\right\}\\
&=& \sum_{A^{\prime}\subseteq A}\; (-1)^{|A\backslash A^{\prime}|} \sum_{B^{\prime}\subseteq B} (-1)^{|B\backslash B^{\prime}|}\; \left\{g(A^{\prime})+g(B^{\prime})\right\}\\
&=& \sum_{A^{\prime}\subseteq A}\; (-1)^{|A\backslash
A^{\prime}|}\; \left\{g(A^{\prime})\sum_{B^{\prime}\subseteq B}
(-1)^{|B\backslash B^{\prime}|} +\sum_{B^{\prime}\subseteq B}
(-1)^{|B\backslash B^{\prime}|}g(B^{\prime})\right\}.
\end{eqnarray*}
By assumption $B\neq\emptyset$, so that equation (\ref{eq:basic}) implies
$\sum_{B^{\prime}\subseteq B} (-1)^{|B\backslash B^{\prime}|}=0$ and thus
\begin{eqnarray*}
h&=& \sum_{A^{\prime}\subseteq A}\; (-1)^{|A\backslash
A^{\prime}|}\; \left\{\sum_{B^{\prime}\subseteq B}
(-1)^{|B\backslash B^{\prime}|}g(B^{\prime})\right\}.
\end{eqnarray*}
Since we also have $A\neq\emptyset$,
equation (\ref{eq:basic}) also implies that
$\sum_{A^{\prime}\subseteq A}\; (-1)^{|A\backslash A^{\prime}|}=0$
and therefore that $h=0$, as required.
\end{proof}
%----------------------------------------------------------
%We are now ready to prove our results in \S\ref{sec:LMLmodels}.

\subsection*{Proof of Theorem~\ref{THM:gamma.eq.zero.and.fact.mu}}

We first show (i)$\Leftrightarrow$(ii). The implication
(i)$\Rightarrow$(ii) is straightforward. To prove that
(i)$\Leftarrow$(ii) we use the same argument as in the proof of
Theorem~1 of \citet{drton2008binary}, which for completeness we
now give in detail.

We want to show that for every $i_{A\cup B}\in\I_{A\cup B}$ it
holds that
\begin{eqnarray}\label{EQN:alb.10}
P(X_{A\cup B}=i_{A\cup B})=P(X_{A}=i_{A})P(X_{B}=i_{B})
\end{eqnarray}
and we do this by induction on the number of 0s in $i_{A\cup B}$,
which we denote by~$k$, with $0\leq k\leq |A\cup B|$. More
precisely, point (ii) implies that the factorization~(\ref{EQN:alb.10}) is satisfied for $k=0$, also when $A$ and $B$
are replaced with proper subsets, and we show that if such
factorization is satisfied for every $k<j\leq |A\cup B|$ then it
is also true for $k=j$. Since $j>0$, there exists $v\in A\cup B$
such that $i_{v}=0$ and, in the following, we assume without loss
of generality that $v\in A$, and set $A^{\prime}=A\backslash
\{v\}$. Hence,
\begin{eqnarray*}
P(X_{A\cup B}=i_{A\cup B})
&\!\!=\!\!&
P(X_{A^{\prime}\cup B}=i_{A^{\prime}\cup B})-P(X_{A^{\prime}\cup B}=i_{A^{\prime}\cup B}, X_{v}=1)\\
&\!\!=\!\!&
P(X_{A^{\prime}}=i_{A^{\prime}})P(X_{B}=i_{B}) - P(X_{A^{\prime}}=i_{A^{\prime}}, X_{v}=1)P(X_{B}=i_{B})\\
&\!\!=\!\!&
\left\{P(X_{A^{\prime}}=i_{A^{\prime}})-P(X_{A^{\prime}}=i_{A^{\prime}},
X_{v}=1)\right\}
P(X_{B}=i_{B})\\
&\!\!=\!\!&
P(X_{A}=i_{A})P(X_{B}=i_{B})
\end{eqnarray*}
as required; note that the factorizations in the second equality
follow from (ii) and the inductive assumption, because the number
of 0s in $i_{A^{\prime}\cup B}$ is $j-1$, and furthermore that for
the case $A^{\prime}=\emptyset$ we use the convention
$P(X_{A^{\prime}}=i_{A^{\prime}})=1$ and
$P(X_{A^{\prime}}=i_{A^{\prime}}, X_{v}=1)=P(X_{v}=1)$.

We now show (ii)$\Leftrightarrow$(iii). The implication
(ii)$\Rightarrow$(iii) follows by noticing that
$$\g{D}=\sum_{E\subseteq D}\;(-1)^{|D\backslash E|}\;g(E),$$
where $g(E)=\log\m{E}$. Hence, if we set $D=A^{\prime}\cup B^{\prime}$,
with $A^{\prime}$ and $B^{\prime}$ as in (iii),  the statement in
(ii) implies that for every $E\subseteq D$
\begin{eqnarray*}
g(E)=\log\m{E}=\log\m{A^{\prime}\cap E}+\log\m{B^{\prime}\cap
E}=g(A^{\prime}\cap E)+g(B^{\prime}\cap E)
\end{eqnarray*}
so that the equality $\g{D}=0$ follows immediately from
Lemma~\ref{THM:sum.to.zero.more.general}. We next show that
(ii)$\Leftarrow$(iii) by induction on the cardinality of $A\cup
B$, which we again denote by $k$.

We first notice that the identity $\m{A\cup B}=\m{A}\times\m{B}$
is trivially true whenever either $A=\emptyset$ or $B=\emptyset$
because $\m{\emptyset}=1$. Then, if $|A\cup B|=2$, so that
$|A|=|B|=1$,
%it is sufficient to show that
$\g{A\cup B}=0$ implies $\m{A\cup B}=\m{A}\times\m{B}$ as an
immediate consequence of the identity
%and this follows immediately from the fact that
$\g{A\cup B}=\log\{\m{A\cup B}/\m{A}\m{B}\}$. Finally, we show
that if the result is true for $|A\cup B|<k$  then it also holds
for $|A\cup B|=k$. To this aim, it is useful to introduce the
vector $\m{}^{\ast}$ indexed by $E\subseteq A\cup B$ defined as
follows:
\begin{eqnarray*}
\m{}^{\ast}=\left\{
\begin{array}{ll}
\m{E} &\quad \mbox{for $E\subset A\cup B$};\\
\m{A}\times\m{B} &\quad \mbox{for $E=A\cup B$}.\\
\end{array}
\right.
\end{eqnarray*}
Condition (iii) is recursive and, therefore, if it is satisfied
for $A$ and $B$ then it is also satisfied for every
$A^{\prime}\subseteq A$ and $B^{\prime}\subseteq B$ such that
$|A^{\prime}\cup B^{\prime}|<k$, that is, such that
$A^{\prime}\cup B^{\prime}\subset A\cup B$. As a consequence, the
inductive assumption implies that  $\m{A^{\prime}\cup
B^{\prime}}=\m{A^{\prime}}\times \m{B^{\prime}}$ for every
$A^{\prime}\subseteq A$ and $B^{\prime}\subseteq B$ such that
$A^{\prime}\cup B^{\prime}\neq A\cup B$, and this in turn has two
implications: firstly, we only have to prove that (iii) implies
$\m{A\cup B}=\m{A}\times \m{B}$; secondly, we have
$\sum_{E\subseteq A\cup B}\; (-1)^{|(A\cup B)\backslash E|} \;
\log\m{E}^{\ast}=0$ by Lemma~\ref{THM:sum.to.zero.more.general}.
Hence, we can write
\begin{eqnarray}
\nonumber \g{A\cup B}
&=& \sum_{E\subseteq A\cup B}\; (-1)^{|(A\cup B)\backslash E|} \;\log\m{E}\\
\nonumber
&=&\log\m{A\cup B} + \sum_{E\subset A\cup B}\; (-1)^{|(A\cup B)\backslash E|} \;\log\m{E}^{\ast}\\
\nonumber
&=&\log\m{A\cup B}-\log\m{A}-\log\m{B} + \sum_{E\subseteq A\cup B}\; (-1)^{|(A\cup B)\backslash E|} \; \log\m{E}^{\ast}\\
\label{EQN:alb.04} &=&\log\m{A\cup B}-\log\m{A}-\log\m{B}
\end{eqnarray}
and since (iii) implies that $\g{A\cup B}=0$ then
(\ref{EQN:alb.04}) leads to $\m{A\cup B}=\m{A}\times \m{B}$, and
the proof is complete.

\subsection*{Proof of Corollary~\ref{THM:gamma.eq.zero.and.mutual.indep}}

For $i=1,\ldots, r$, we introduce the sets
%$A_{-i}=\bigcup_{\stackrel{j=1}{j\neq i}}^{r} A_{j}$
$A_{-i}=\bigcup_{j\neq i} A_{j}$ and $\mathcal{D}_i=\{D|
D\subseteq A_{i}\cup A_{-i}, \textrm{ with both } D\cap
A_{i}\neq\emptyset\mbox{ and }D\cap A_{-i}\neq\emptyset\}$ and
note that, by Theorem~\ref{THM:gamma.eq.zero.and.fact.mu},
$X_{A_i} \ci X_{A_{-i}}$ if and only if $\g{D}=0$ for every $D \in
\mathcal{D}_i$. The mutual independence $X_{A_1} \ci \cdots \ci
X_{A_r}$ is equivalent to $X_{A_i} \ci X_{A_{-i}}$ for every $i
=1,\dots,r$ and, by Theorem~\ref{THM:gamma.eq.zero.and.fact.mu},
the latter holds true if and only if $\g{D}=0$ for every $D \in
\bigcup_{i=1}^r \mathcal{D}_i$. Hence, to prove the desired result
we have to show that $\mathcal{D} = \bigcup_{i=1}^r
\mathcal{D}_i$.

It is straightforward to see that $\mathcal{D}_{i}\subseteq
\mathcal{D}$ for every $i=1,\ldots, r$, so that
$\mathcal{D}\supseteq\bigcup_{i=1}^r \mathcal{D}_i$. The reverse
inclusion $\mathcal{D}\subseteq\bigcup_{i=1}^r\mathcal{D}_i$ can
be shown by noticing that for any $D\in\mathcal{D}$ one can always
find at least one set $A_{i}$ such that $D\cap A_{i}\neq
\emptyset$; since $D\not\subseteq A_{i}$ by construction, it holds
that $D\cap A_{-i}\neq \emptyset$ and therefore that $D\in
\mathcal{D}_{i}$. Hence, we have $D\in
\bigcup_{i=1}^r\mathcal{D}_{i}$ for every $D\in \mathcal{D}$, and
this completes the proof.

%----------------------------------------------------------
\subsection*{Proof of Corollary~\ref{THM:independence.of.singletons}}
It is enough to apply
Corollary~\ref{THM:gamma.eq.zero.and.mutual.indep} by taking
$A=A_{1}\cup\cdots\cup A_{r}$ with $|A_{i}|=1$ for every
$i=1,\ldots, r$.

%----------------------------------------------------------
\subsection*{Proof of Theorem~\ref{THM:gamma.eq.zero.and.disc.MP}}

Every set $D\subseteq V$ that is disconnected in $\G$ can be partitioned uniquely into inclusion maximal connected sets $\tD_{1},\ldots,\tD_{r}$ with $r\geq 2$. It is shown in Lemma~1 of \citet{drton2008binary} that $\p{}\in B(\G)$ if and only if $X_{\tD_{1}}\ci\cdots\ci X_{\tD_{r}}$ for every disconnected set $D\subseteq V$. Hence, it is sufficient to prove that the mutual independence $X_{\tD_{1}}\ci\cdots\ci X_{\tD_{r}}$ holds for every disconnected set $D$ in $\G$ if and only if $\g{D}=0$ for every disconnected set $D$ in $\G$.

We assume that $D=\tD_{1}\cup\cdots\cup \tD_{r}$ is an arbitrary subset of $V$ that is disconnected in $\G$ and note that, in this case, also every set $E\subseteq \tD_{1}\cup\cdots\cup \tD_{r}$ such that $E\not\subseteq \tD_{i}$ for every $i=1,\ldots, r$ is disconnected in $\G$. Then, if  $X_{\tD_{1}}\ci\cdots\ci X_{\tD_{r}}$ it follows from Corollary~\ref{THM:gamma.eq.zero.and.mutual.indep} that also $\g{D}=0$. On the other hand, if
every element of $\g{}$ corresponding to a disconnected set is equal to zero, then $\g{E}=0$ for every
$E\subseteq \tD_{1}\cup\cdots\cup \tD_{r}$ such that $E\not\subseteq \tD_{i}$ for every $i=1,\ldots, r$ and, by Corollary~\ref{THM:gamma.eq.zero.and.mutual.indep}, this implies that $X_{\tD_{1}}\ci\cdots\ci X_{\tD_{r}}$.

%----------------------------------------------------------
\section{Algorithm for maximum likelihood estimation}\label{SEC:APP-C}
%----------------------------------------------------------
%
Let $n=(n_D)_{D\subseteq V}$ be a vector of cell counts observed under
Multinomial sampling from a binary random vector $X_{V}$
with probability parameter $\p{}>0$.
If we denote by $\psi= N\pi$
the expected value of $n$, where $N=\boldsymbol{1}^\T n$
is the total observed count (sample size)
and $\boldsymbol{1}$ is the unit vector of size $R^{2^{|V|}}$.
We can deal with maximum likelihood estimation of $\p{}$
by considering $n$ as coming from Poisson sampling with parameter
$\psi>0$ and, in this case, we will find
$\boldsymbol{1}^T \hat{\psi}=N$ and $N^{-1}\hat{\psi}=\hat{\pi}$.
Thus, using the reparameterization $\omega = \log\psi$
to remove the positivity constraint on $\psi$,
we can write the log-likelihood function (up to a constant term) as
$$
\ell(\omega;n) = n^\T \omega - \boldsymbol{1}^\T \exp(\omega),
\quad\omega\in\ R^{2^{|V|}}.
$$

The log-mean linear parameter $\gamma$ is obtained from $\omega$ through the
reparameterization $\gamma = \mM^\T \log\{\mZ\exp(\omega)\}$, $\omega\in\ R^{2^{|V|}}$,
so that the linear constraint on $\gamma$ defined by $\mH^\T\gamma=0$
can be transformed into the following non-linear constraint on $\omega$:
$$
g(\omega)=\mH^\T \mM^\T \log\{\mZ\exp(\omega)\} = 0.
$$
Maximum likelihood estimation in the log-mean linear model defined by $\mH$
can thus be formulated as the problem of maximizing the objective function $\ell(\omega;n)$,
with respect to~$\omega$, subject to the constraint $g(\omega)=0$.

A well-known method for the above constrained optimization problem
looks for a saddle point of the Lagrangian function $\ell(\omega;n)+\tau g(\omega)$,
where $\tau$ is a $k$-dimensional vector of unknown Lagrange multipliers,
by solving for $\omega$ and $\tau$ the gradient equation
$$
\frac{\partial\ell(\omega;n)}{\partial\omega}+
\frac{\partial g(\omega)}{\partial\omega}\tau=0
$$
together with the constraint equation $g(\omega)=0$.
If $\hat{\omega}$ is a local maximum of $\ell(\omega;n)$
subject to $g(\omega) = 0$,
and $\partial g(\omega)/\partial\omega$ is a full rank matrix,
then a classical result \citep{bert:1982} guarantees
that there exists a unique $\hat{\tau}$
such that the gradient equation is satisfied by $(\hat\omega,\hat\tau)$.
In the following we assume that the maximum likelihood estimate
of interest is a local (constrained) maximum.

The gradient equation requires that the gradient of $\ell$,
that is, the score vector
$$
s(\omega;n)=\frac{\partial\ell(\omega;n)}{\partial\omega}=n-\exp(\omega),
$$
be orthogonal to the constraining manifold defined by $g(\omega)=0$, that is,
belong to the vector space spanned by the columns of
\begin{eqnarray*}
\mG(\omega)=\frac{\partial g(\omega)}{\partial\omega}&=&
\frac{\partial\{\mZ\exp(\omega)\}}{\partial\omega}
\frac{\partial \log\{\mZ \exp(\omega)\}}{\partial\{\mZ\exp(\omega)\}}\mM \mH\\
&=&\diag\exp(\omega)\,\mZ^\T[\diag\{\mZ\exp(\omega)\}]^{-1}\mM\mH,
\end{eqnarray*}
where $\diag\,v$ is the diagonal matrix with diagonal entries
taken from the vector $v$. We remark that $\mG(\omega)$ has full rank,
for all $\omega\in R^{2^{|V|}}$, because $\mH$ has full rank by construction.

Since no closed-form solution of the system
formed by the gradient and constraint equations is available (in our case)
we resort to an iterative procedure inspired by \citet{ait-sil:1958}
and \citet{lang:1996}. Specifically, we use the Fisher-score-like updating equation
$$
\left[\begin{array}{c}
\omega^{t+1} \\
\tau^{t+1} \\
\end{array}\right]=
\left[\begin{array}{c}
\omega^{t} \\
0 \\
\end{array}\right]+
\left[\begin{array}{cc}
\mF(\omega^{t}) & -\mG(\omega^{t})\\
-\mG(\omega^{t})^\T & 0
\end{array}\right]^{-1}
\left[\begin{array}{c}
s(\omega^t;n) \\
g(\omega^t) \\
\end{array}\right]
$$
to take step $t+1$ of the procedure,
where $\omega^t$ and $\tau^t$ (unused)
are the estimates of $\omega$ and $\tau$ (respectively)
at step $t$, and $\mF(\omega)$ is the Fisher information matrix
$$
\mF(\omega)=-E\left\{\frac{\partial s(\omega;n)}{\partial\omega}\right\}=
-E\{-\diag\exp(\omega)\}=\diag\exp(\omega)
$$
at $\omega\in R^{2^{|V|}}$. The above updating equation is obtained
using a first order expansion of $s(\omega;n)$ and $g(\omega)$ about $\omega^t$;
see \citet{eva-for:2011} for details.

The matrix inversion in the updating equation can be solved block-wise
as follows \citep{ait-sil:1958}:
$$
\left[\begin{array}{cc}
\mF(\omega^{t}) & -\mG(\omega^{t})\\
-\mG(\omega^{t})^\T & 0
\end{array}\right]^{-1}=
\left[\begin{array}{cc}
 \mR & \mQ\\
 \mQ^\T & -\mP^{-1}\\
\end{array}\right],
$$
where
\begin{eqnarray*}
\mP &=& \mG(\omega^t)^\T\mF(\omega^t)^{-1}\mG(\omega^t),\\
\mQ &=& -\mF(\omega^t)^{-1}\mG(\omega^t)\mP^{-1},\\
\mR &=& \mF(\omega^t)^{-1}+\mF(\omega^t)^{-1}\mG(\omega^t)\mQ^\T.
\end{eqnarray*}
Then, introducing the relative score vector
$$
e(\omega^t;n)=\mF(\omega^t)^{-1}s(\omega^t;n)=
\{\diag\exp(\omega^t)\}^{-1}\{n-\exp(\omega^t)\},
$$
the updating equation can be split and simplified as
\begin{eqnarray*}
\tau^{t+1} &=& -\mP^{-1}\{\mG(\omega^t)^{\T}e(\omega^t;n)+g(\omega^t)\},\\
\omega^{t+1} &=& \omega^t+e(\omega^t;n)+\mF(\omega^t)^{-1}\mG(\omega^t)\tau^{t+1},
\end{eqnarray*}
so that the instrumental role of Lagrange multipliers becomes apparent,
and it is clear that the algorithm actually runs in the space of $\omega$.
Notice that the updates take place in the rectangular space
$R^{2^{|V|}}$, so that there is no risk of out of range estimation.

Since the algorithm does not always converge when the starting estimate $\omega^0$
is not close enough to $\hat{\omega}$, it is necessary to
introduce a step size into the updating equation.
The standard approach to choosing a step size in unconstrained optimization problems
is to use a value for which the objective function to be maximized
increases. However, since in our case we are looking for a saddle
point of the Lagrangian function, we need to adjust the standard strategy.
Specifically, \citet{ber:1997} suggests to introduce a step size in the updating equation for $\omega$,
which becomes
\begin{eqnarray*}
\omega^{t+1} &=& \omega^t+
\mathrm{step}^t\{e(\omega^t;n)+F(\omega^t)^{-1}\mG(\omega^t)\tau^{t+1}\},
\end{eqnarray*}
with $0<\mathrm{step}^t\le1$, while the updating equation for $\tau$ is unchanged,
in light of the fact that $\tau^{t+1}$ is computed from scratch at each iteration.
Our choice of $\mathrm{step}^t$ is based on a simple step halving criterion,
which has proven satisfactory for our needs, but more sophisticated criteria are available. At convergence
we obtain $\hat{\gamma}=\mM^\T \log\{\mZ\exp(\hat\omega)\}$
with asymptotic covariance matrix
$$
\mathrm{asy}\,\mathrm{cov}(\hat{\gamma}) = \mJ^\T\mR\mJ,
$$
where $\mJ=\diag\exp(\hat{\omega})\,\mZ^\T[\diag\{\mZ\exp(\hat{\omega})\}]^{-1}\mM$
is the Jacobian of the map $\omega\mapsto\gamma$.

Finally, concerning the choice of the initial estimate $\omega^0$,
we start from the maximum likelihood estimate under the saturated model:
this choice is believed to result in quick convergence,
because it makes the algorithm start close to the data,
and our experience confirms this belief.
%If zero cell counts are present, we smooth the data by means of
%a convex combination with the mutual independence table having the same sample size
%and univariate counts as the data, using weights $0.5$ for this ``prior'' table
%and $N$ for the actually observed table.

\section{Details on the application}\label{SEC:APP-D}
%----------------------------------------------------------
In this section we provide a formal description of some technical details
of our application of log-mean linear models to the data by \citet{coppen:1966}.

Under the connected set Markov property, the bidirected graph in Figure~\ref{FIG.bid.graph}
encodes the marginal independencies  $X_{\{1,2\}} \ci X_{4}$ and
$X_{1} \ci X_{\{3,4\}}$, which are satisfied if and only~if
\begin{eqnarray}\label{EQN:supp001}
\g{\{1,3\}}=\g{\{1,4\}}=\g{\{2,4\}}=\g{\{1,2,4\}}=\g{\{1,3,4\}}=0;
\end{eqnarray}
note that variable coding in uninfluential here.
The directed acyclic graph in Figure~\ref{FIG:DAG.with.latent}
is a possible data generating process for the above bidirected graph model,
because the directed Markov property \citep[\S~3.2.2]{lau:1996} implies, among others, the same marginal independencies and, moreover,
it is associated with the recursive factorization
\begin{eqnarray}\label{EQN:supp002}
\mbox{pr}(X_{V}=x_{V}, U=u)=
\mbox{pr}(x_{2}\mid x_{1}, u)
\mbox{pr}(x_{3}\mid x_{4}, u)
\mbox{pr}(x_{1})
\mbox{pr}(x_{4})
\mbox{pr}(u),
\end{eqnarray}
where $u\in\{\mbox{on},\mbox{off}\}$ and $x_{3}\in\{\mbox{yes},\mbox{no}\}$; see \citet{lau:1996} and \citet{drton2008binary} for details.

We claimed that, if the latent $U$ is a necessary factor for Depression,
that~is, $\mbox{pr}(X_{3}=\mbox{yes}\mid U=\mbox{off})=0$,
then $X_{\{1,2\}}\ci X_{4}\mid \{X_{3}=\mbox{yes}\}$. This follows by noticing that
$$
\mbox{pr}(X_{\{1,2,4\}}=x_{\{1,2,4\}}, U=u\mid X_{3}=\mbox{yes})\propto
\mbox{pr}(X_{\{1,2,4\}}=x_{\{1,2,4\}}, X_{3}=\mbox{yes}, U=u)
$$
so that marginalizing over $U$ one obtains
\begin{eqnarray}\label{EQN:supp005}
\mbox{pr}(X_{\{1,2,4\}}=x_{\{1,2,4\}}\mid X_{3}=\mbox{yes})\propto
\mbox{pr}(X_{\{1,2,4\}}=x_{\{1,2,4\}}, X_{3}=\mbox{yes}, U=\mbox{on}),
\end{eqnarray}
because $\mbox{pr}(X_{\{1,2,4\}}=x_{\{1,2,4\}}, X_{3}=\mbox{yes}, U=\mbox{off})=0$
by the definition of necessary factor. Hence, one can factorize the the right hand side
of (\ref{EQN:supp005}) as in (\ref{EQN:supp002}) and
the required context-specific independence follows immediately from the application of the factorization criterion; see \citet[eqn. (3.6)]{lau:1996}.

We now show that, if the levels of the variable $X3$=Depression are coded
so that $\mbox{yes}=1$, then the bidirected graph model in Figure~\ref{FIG.bid.graph}
satisfies the additional context-specific independence
$X_{\{1,2\}}\ci X_{4}\mid \{X_{3}=1\}$ if and only if,
in addition to (\ref{EQN:supp001}),
it holds that $\gamma_{\{2,3,4\}}=\gamma_{\{1,2,3,4\}}=0$.
To this aim, we first notice that for the conditional distribution of $X_{\{1,2,4\}}\mid \{X_{3}=1\}$ the mean parameter, denoted by $\mu^{(3)}$, has entries
\begin{eqnarray}\label{EQN:supp003}
\mu_{D}^{(3)}=\mbox{pr}(X_{D}
=1_{D}\mid X_{3}=1)=\frac{\mbox{pr}(X_{D}=1_{D},X_{3}=1)}{\mbox{pr}(X_{3}=1)}
=\frac{\mu_{D\cup \{3\}}}{\mu_{\{3\}}}
\end{eqnarray}
for every $D\subseteq \{1,2,4\}$.
From (\ref{EQN:supp003}) it is possible to compute the corresponding
log-mean linear parameter, denoted by $\gamma^{(3)}$,
as a function of $\mu$. In particular, if one computes
$\gamma_{\{1,4\}}^{(3)}$, $\gamma_{\{2,4\}}^{(3)}$, $\gamma_{\{1,2,4\}}^{(3)}$
and then $\gamma_{\{1,3,4\}}$, $\gamma_{\{2,3,4\}}$, $\gamma_{\{1,2,3,4\}}$
by exploiting the factorizations of $\mu$ implied by (\ref{EQN:supp001})
and Theorem~\ref{THM:gamma.eq.zero.and.fact.mu},
then it is straightforward to see that
\begin{eqnarray}\label{EQN:supp004}
\gamma_{\{1,4\}}^{(3)}=\gamma_{\{1,3,4\}}=0,
\qquad
\gamma_{\{2,4\}}^{(3)}=\gamma_{\{2,3,4\}}
\quad\mbox{and}\quad
\gamma_{\{1,2,4\}}^{(3)}=\gamma_{\{1,2,3,4\}}.
\end{eqnarray}
The context-specific independence $X_{\{1,2\}}\ci X_{4}\mid \{X_{3}=1\}$
is a marginal independence in the distribution of $X_{\{1,2,4\}}\mid \{X_{3}=1\}$
and thus, by Theorem~\ref{THM:gamma.eq.zero.and.fact.mu},
it holds if and only if $\gamma_{\{1,4\}}^{(3)}=\gamma_{\{2,4\}}^{(3)}=\gamma_{\{1,2,4\}}^{(3)}=0$.
Therefore, if (\ref{EQN:supp001}) holds true,
it follows from (\ref{EQN:supp004}) that $\gamma_{\{2,3,4\}}=\gamma_{\{1,2,3,4\}}=0$
is a necessary and sufficient condition for  $X_{\{1,2\}}\ci X_{4}\mid \{X_{3}=1\}$
to hold.
\bibliographystyle{chicago}
\bibliography{gamma-ref}

\begin{thebibliography}{}

\bibitem[\protect\citeauthoryear{Aitchison and Silvey}{Aitchison and
  Silvey}{1958}]{ait-sil:1958}
Aitchison, J. and S.~D. Silvey (1958).
\newblock Maximum likelihood estimation of parameters subject to restraints.
\newblock {\em Annals of Mathematical Statistics\/}~{\em 29\/}(3), 813--828.

\bibitem[\protect\citeauthoryear{Bergsma, Croon, and Hagenaars}{Bergsma
  et~al.}{2009}]{ber-al:2009}
Bergsma, W., M.~Croon, and J.~Hagenaars (2009).
\newblock {\em Marginal models for dependent, clustered, and longitudinal
  categorical data}.
\newblock London, UK: Springer.

\bibitem[\protect\citeauthoryear{Bergsma}{Bergsma}{1997}]{ber:1997}
Bergsma, W.~P. (1997).
\newblock {\em Marginal models for categorical data}.
\newblock Ph.d thesis, Tilburg University, Tilburg, NL.

\bibitem[\protect\citeauthoryear{Bergsma and Rudas}{Bergsma and
  Rudas}{2002}]{ber-rud:2002}
Bergsma, W.~P. and T.~Rudas (2002).
\newblock Marginal log-linear models for categorical data.
\newblock {\em Annals of Statistics\/}~{\em 30\/}(1), 140--159.

\bibitem[\protect\citeauthoryear{Bertsekas}{Bertsekas}{1982}]{bert:1982}
Bertsekas, D.~P. (1982).
\newblock {\em Constrained optimization and {L}agrange multiplier methods}.
\newblock New York: Academic Press.

\bibitem[\protect\citeauthoryear{Coppen}{Coppen}{1966}]{coppen:1966}
Coppen, A. (1966).
\newblock The {M}ark-{N}yman temperament scale: an {E}nglish translation.
\newblock {\em Brit. J. Med. Psychol.\/}~{\em 39\/}(1), 55--59.

\bibitem[\protect\citeauthoryear{Cox and Wermuth}{Cox and
  Wermuth}{1993}]{cw:1993}
Cox, D.~R. and N.~Wermuth (1993).
\newblock Linear dependencies represented by chain graphs.
\newblock {\em Statistical Science\/}~{\em 8\/}(3), 204--218.

\bibitem[\protect\citeauthoryear{Cox and Wermuth}{Cox and
  Wermuth}{1996}]{cw:1996}
Cox, D.~R. and N.~Wermuth (1996).
\newblock {\em Multivariate dependencies. Models, analysis and interpretation}.
\newblock London: Chapman and Hall.

\bibitem[\protect\citeauthoryear{Darroch and Speed}{Darroch and
  Speed}{1983}]{dar-spe:1983}
Darroch, J.~N. and T.~P. Speed (1983).
\newblock Additive and multiplicative models and interactions.
\newblock {\em Annals of Statistics\/}~{\em 11\/}(3), 724--738.

\bibitem[\protect\citeauthoryear{Drton and Richardson}{Drton and
  Richardson}{2008}]{drton2008binary}
Drton, M. and T.~Richardson (2008).
\newblock {Binary models for marginal independence}.
\newblock {\em Journal of the Royal Statistical Society: Series B (Statistical
  Methodology)\/}~{\em 70\/}(2), 287--309.

\bibitem[\protect\citeauthoryear{Ekholm, McDonald, and Smith}{Ekholm
  et~al.}{2000}]{Ekh-al:2000}
Ekholm, A., J.~W. McDonald, and P.~W.~F. Smith (2000).
\newblock Association models for a multivariate binary response.
\newblock {\em Biometrics\/}~{\em 56\/}(3), 712--718.

\bibitem[\protect\citeauthoryear{Ekholm, Smith, and McDonald}{Ekholm
  et~al.}{1995}]{Ekh-al:1995}
Ekholm, A., P.~W.~F. Smith, and J.~W. McDonald (1995).
\newblock Marginal regression analysis of a multivariate binary response.
\newblock {\em Biometrika\/}~{\em 82\/}(4), 847--854.

\bibitem[\protect\citeauthoryear{Evans and Forcina}{Evans and
  Forcina}{2011}]{eva-for:2011}
Evans, R.~J. and A.~Forcina (2011).
\newblock Two algorithms for fitting constrained marginal models.
\newblock Technical report, arXiv:1110.2894v1[stat.CO].

\bibitem[\protect\citeauthoryear{Evans and Richardson}{Evans and
  Richardson}{2012}]{eva-rich:2012}
Evans, R.~J. and T.~S. Richardson (2012).
\newblock Marginal log-linear parameters for graphical {M}arkov models.
\newblock Technical report, arXiv:1105.6075v2[stat.ME].

\bibitem[\protect\citeauthoryear{Forcina, Lupparelli, and Marchetti}{Forcina
  et~al.}{2010}]{Forc-al:2010}
Forcina, A., M.~Lupparelli, and G.~M. Marchetti (2010).
\newblock Marginal parameterizations of discrete models defined by a set of
  conditional independencies.
\newblock {\em Journal of Multivariate Analysis\/}~{\em 101\/}(10), 2519--2527.

\bibitem[\protect\citeauthoryear{Glonek and Mc{C}ullagh}{Glonek and
  Mc{C}ullagh}{1995}]{glo-mcc:1995}
Glonek, G.~J.~N. and P.~Mc{C}ullagh (1995).
\newblock Multivariate logistic models.
\newblock {\em Journal of the Royal Statistical Society, Series B
  (Methodological)\/}~{\em 57\/}(3), 533--546.

\bibitem[\protect\citeauthoryear{Lang}{Lang}{1996}]{lang:1996}
Lang, J.~B. (1996).
\newblock Maximum likelihood methods for a generalized class of log-linear
  models.
\newblock {\em Annals of Statistics\/}~{\em 24\/}(2), 726--752.

\bibitem[\protect\citeauthoryear{Lauritzen}{Lauritzen}{1996}]{lau:1996}
Lauritzen, S.~L. (1996).
\newblock {\em Graphical Models}.
\newblock Oxford, UK: Clarendon Press.

\bibitem[\protect\citeauthoryear{Lupparelli, Marchetti, and Bergsma}{Lupparelli
  et~al.}{2009}]{lup-mar-ber:2009}
Lupparelli, M., G.~M. Marchetti, and W.~P. Bergsma (2009).
\newblock Parameterizations and fitting of bi-directed graph models to
  categorical data.
\newblock {\em Scandinavian Journal of Statistics\/}~{\em 36\/}(3), 559--576.

\bibitem[\protect\citeauthoryear{Marchetti and Lupparelli}{Marchetti and
  Lupparelli}{2011}]{mar-lup:2011}
Marchetti, G.~M. and M.~Lupparelli (2011).
\newblock Chain graph models of multivariate regression type for categorical
  data.
\newblock {\em Bernoulli\/}~{\em 17\/}(3), 827--844.

\bibitem[\protect\citeauthoryear{Richardson}{Richardson}{2003}]{ric:2003}
Richardson, T.~S. (2003).
\newblock {M}arkov property for acyclic directed mixed graphs.
\newblock {\em Scandinavian Journal of Statistics\/}~{\em 30\/}(1), 145--157.

\bibitem[\protect\citeauthoryear{Rudas, Bergsma, and Nemeth}{Rudas
  et~al.}{2010}]{rud-al:2010}
Rudas, T., W.~Bergsma, and R.~Nemeth (2010).
\newblock Marginal log-linear parameterization of conditional independence
  models.
\newblock {\em Biometrika\/}~{\em 97\/}(4), 1006--1012.

\bibitem[\protect\citeauthoryear{Wermuth}{Wermuth}{1976}]{wermuth:1976}
Wermuth, N. (1976).
\newblock Model search among multiplicative models.
\newblock {\em Biometrics\/}~{\em 32\/}(2), 253--263.

\end{thebibliography}
\end{document}